\newcommand{\C}{{\mathbb C}}
\newcommand{\Z}{{\mathbb Z}}
\newcommand{\slt}{\mathfrak{sl}_2}
\newcommand{\g}{\mathfrak{g}}
\numberwithin{equation}{section}
\theoremstyle{plain}
\newtheorem{thm}{Theorem}[section]
\newtheorem{cor}[thm]{Corollary}
\newtheorem{prop}[thm]{Proposition}
\begin{document}

\title{Confluent primary fields in the conformal field theory }
\date{\today}
\author{Hajime Nagoya\footnote{Research Fellow of the Japan Society for the Promotion of Science} \\ 
Department of Mathematics, Kobe University,\\
 Kobe 657-8501, Japan
\\
e-mail: nagoya@math.kobe-u.ac.jp
\\
and 
\\
Juanjuan Sun
\\
Graduate School of Mathematical Sciences, The
University of Tokyo, \\Tokyo 153-8914, Japan
\\
email: sunjuan@ms.u-tokyo.ac.jp
}

\date{}

\maketitle

{\footnotesize
\begin{quote}

{\bf Abstract:} 
 For any complex simple Lie algebra, 
we generalize the  primary fileds  in the Wess-Zumino-Novikov-Witten conformal 
field theory for a case with irregular singularities. We refer to these generalized 
primary fields as
 confluent primary 
fields. 
We present the screening currents Ward identity,  a recursion rule for computing the 
 expectation values of the products
of  confluent primary fields. 
In the case of $\slt$,  the expectation values of the products of 
confluent primary fields are integral formulas of solutions to confluent KZ equations 
given in  \cite{JNS}.  
By computing the operator
product expansion of the energy-momentum tensor $T(z)$ and the
confluent primary fields, we obtain new differential operators. 
 Moreover, in the case of $\slt$, 
these differential operators are  the same as those of 
the confluent KZ equations  \cite{JNS}. 

\end{quote} }

{\footnotesize
\begin{quote}

{\noindent\bf Mathematics Subject Classifications (2000):}
32G34, 17B80, 34M55, 37J35, 81R12.

\end{quote} }

{\footnotesize
\begin{quote}

{\noindent\bf Keywords:} KZ equation, Wakimoto realization, Conformal field theory.

\end{quote} }

\section{Introduction}
In  the last twenty-five years, 
 many mathematicians and physicists have contributed to the study and development of
the two-dimensional conformal field theory  (CFT).  
This theory finds many applications 
in statistical physics and  string theory. It also finds applications in 
many fields of mathematics, 
such as representation theory, integrable systems, and 
topology. 

 In the Wess-Zumino-Novikov-Witten (WZNW) 
model,  the Knizhnik-Zamolodchikov (KZ) equation plays a key role. The KZ equation is satisfied by the correlation 
functions of the model \cite{KZ} and 
 a system of linear partial differential equations with regular
singularities. Accordingly, solutions of the KZ equation are expressed in terms of integral representations 
of hypergeometric functions  in several variables  
\cite{SV}. Further, the KZ equation can be viewed as a quantization of 
the Schlesinger equation; this equation defines the isomonodromy deformation of 
 linear differential equations 
with regular singularities \cite{R}, \cite{H}, \cite{Takasaki}. 

The extensions of  the KZ equations to  irregular
singularities have been considered for the following cases.  
Generalized KZ
equations with Poincar\'e rank 1 at infinity were first obtained for $\slt$ in \cite{BK}   and later for any simple Lie
algebra in \cite{FMTV}. For an arbitrary Poincar\'e rank,   confluent KZ
equations for $\slt$ were presented in \cite{JNS}.  The authors  have presented 
confluent KZ equations for $\frak{sl}_N$ with Poincar\'e rank 2 at
infinity in \cite{NS}. 
 In all the above mentioned cases, 
 the solutions
to the equations are represented in terms of  integral
representations of hypergeometric functions of the confluent type.

 Integral representations of 
solutions to the KZ equations have  also been constructed 
using  free field realizations and 
Wakimoto modules \cite{ATY}, \cite{FF},
\cite{IK}, \cite{K}, and \cite{W}. 
Free field realizations provide a clear understanding of the WZNW CFT and also 
provide valuable insights on representation theory and 
quantum field theory. The solutions to confluent KZ equations, however, 
have not been constructed using free field realizations. 

%

In this paper, 
 using free field realizations, 
we generalize primary fields in  the WZNW CFT for a case with irregular singularities for any simple Lie algebra $\g$. Hereafter, 
we refer to these generalized primary fields  as 
confluent primary fields. 
We present the screening currents Ward identity,  a recursion rule for computing the 
 expectation values of the products
of  confluent primary fields. 
As  seen in section 5, the integral representations for the 
expectation values of the products of 
confluent primary fields are multi-variable confluent hypergeometric functions. 
 In the case of $\slt$, these integral representations 
are equivalent to the solutions of confluent KZ equations \cite{JNS} 
obtained through the confluent process from solutions of the KZ equations. 

Furthermore, we compute the operator product expansion (OPE) of
 the energy-momentum tensor $T(z)$ and the confluent primary fields.
Consequently, new differential operators corresponding to the Virasoro
operators $L_{-1},L_{0},\ldots, L_{r-1}$ appear. Note that in the case of the KZ equations,
essentially  all Virasoro operators correspond to the differential
operator $\partial / \partial z$. Moreover, in the case of $\slt$, these differential operators are the same as 
those of the confluent KZ equations \cite{JNS}. 

In addition to free field realizations and OPE,  we use  
 the truncated Lie algebra $\g_{(r)}=\g[t]/t^{r+1}\g[t]$, where the nonnegative 
integer $r$ corresponds to the Poincar\'e rank, and a confluent Verma module 
of $\g_{(r)}$. 
Confluent Verma modules have been 
defined in \cite{JNS}, and these are  natural generalizations of standard Verma modules. 
In addition, the confluent Verma modules correspond to the non-highest weight 
representations of affine Lie algebras in \cite{FFT}. 


The remainder of this paper is organized as follows. In section 2, we introduce
confluent Verma modules and recapitulate Wakimoto realization of the
affine Lie algebra, following \cite{A}, \cite{ATY}, and \cite{Yamada}. 
 In section 3, we define  confluent primary fields
and  we
show that the set consisting of these fields is a $\g_{(r)}$-module. 
In  section 4, we compute the OPE of the energy-momentum tensor $T(z)$ and 
a confluent primary field. 
In the final section, we present integral representations  
of hypergeometric functions of the confluent type 
from the WZNW CFT. Moreover, we write down those integral representations  
for the case of $\slt$,  and we see that they coincide with solutions of the 
confluent KZ equations for $\slt$ \cite{JNS}. 


\section{Preliminary}
Let $\g$ be a complex simple Lie algebra with Cartan subalgebra $\frak{h}$
 and let  $\Delta$ and 
$\Delta_+$ be the set of roots and positive roots,
respectively. We denote  the
Chevalley generators by $e_i$, $f_i$, and $h_i$ 
 and the simple roots  by $\alpha_i$ ($i=1,\ldots,l=\rm{rank} \g$). 
Let $A=(a_{ij})$ be the Cartan matrix of $\g$.  The Cartan matrix is
realized as $a_{ij} = (\nu_i ,\alpha_j )$, where
$\nu_i=\frac{2}{\alpha_i^2}\alpha_i$
 is the coroot and $(,)$, the
symmetric bilinear form.


\subsection{}
 The confluent (highest weight) Verma modules in the case of 
 $\g=\frak{sl}_2$ have been introduced in \cite{JNS}. In this section, we generalize them to 
 the case of a complex simple Lie algebra. 
 
 To describe an irregular singularity, 
we use the truncated Lie algebra $\g_{(r)}=\g[t]/t^{r+1}\g[t]$ for
$r\in\Z_{\ge 0}$.   We denote
$x\otimes t^i$ by $x[i]$. Let $\frak{b}=
\bigoplus_{\alpha\in\Delta_{+}}  \C e_{\alpha} \oplus \frak{h}$ and $\frak{b}_{(r )} = \frak{b}[t ]/ t^{r+1}
\frak{b}[t ]$. For an ($r + 1$)-tuple of weights $ \lambda = (
\lambda_{0}, \cdots,  \lambda_{r-1},  \lambda_{r} ) $ with a regular element 
$\lambda_{r}$, 
 we define a one-dimensional 
$\frak{b}_{(r)}$-module $\C v_{ \lambda}$ by
\begin{equation}
e_{\alpha}[p]v_{ \lambda}=0,\quad h[p]v_{ \lambda}=\lambda_{p}(h)v_{ \lambda}
\quad (0\le p\le r, \ h\in\frak{h}).
\end{equation}
We denote $ \lambda_{p}(h_{i})$ by $ \lambda^{i}_{p}$. The parameters $ \lambda^{i}_{1},\ldots,  \lambda^{i}_{r}$ are the
new variables corresponding to the irregular singularities.

Consider the induced module 
 \begin{equation}
M( \lambda )
= \mathrm{Ind}^{\g_{(r)}}_{\frak{b}_{(r)}} \C v_{ \lambda}, 
\end{equation}
hereafter called the confluent Verma module.  



\subsection{}
Following \cite{A}, \cite{ATY}, \cite{IK}, and \cite{Yamada}, we recall free field
realizations for  simple Lie algebras. Let $\beta_\alpha(z)$ and
$\gamma^\alpha(z)$ ($\alpha\in \Delta_+$) be boson operators with
conformal weights 1 and 0 that satisfy the canonical OPE
\begin{equation}
\beta_\alpha(z)\gamma^\beta(w)=\frac{\delta_{\alpha,\beta}}{z-w}+\cdots, 
\end{equation}
where the dots denote the terms that are regular at $z=w$. 
We also introduce a free boson $\varphi(z)$ taking value in the Cartan subalgebra and
$\varphi_i(z)=(\nu_i,\varphi(z))$ ($i=1,\ldots,l$) with the OPE
\begin{equation}\label{eq free boson}
\varphi_i(z)\varphi_j(w)=\frac{(\nu_i,\nu_j)}{\kappa}\log (z-w)+\cdots.
\end{equation}
Note that we have 
\begin{equation}\label{eq free boson d}
\left(\lambda, \frac{\partial^m \varphi(z)}{m!}\right)
\left(\mu, \frac{\partial^n \varphi(w)}{n!}\right)
=\frac{(\lambda, \mu)}{\kappa}
\begin{pmatrix}
m+n
\\
m
\end{pmatrix}
\frac{(-1)^{m+1}}{m+n}\frac{1}{(z-w)^{n+m}}+\cdots, 
\end{equation}
for $m,n\in\mathbb{Z}_{\ge 0}$, $m+n\neq 0$.

 We recall the definition of the currents $E_i(z)$, $H_i(z)$, and $F_i(z)$. 

Let $V(\lambda)$ be the Verma module of $\g$ with the highest
weight vector $| \lambda \rangle$ 
and $V( \lambda)^*$, the dual module of $V( \lambda)$
generated by $\langle \lambda |$ with $\langle  \lambda |f_\alpha=0$
and $\langle \lambda | h_i= \lambda^i$. The bilinear form $\langle,
\rangle$ is defined from $\langle  \lambda | \lambda\rangle=1$. 

We realize the elements of the algebra $\g$ in terms of those in the polynomial ring $\C[x^\alpha]$ with positive roots 
$\alpha\in\Delta_+$ as differential operators. 
The
differential operator $J\left(
\partial/\partial x, x,  \lambda\right)$ corresponding to an element 
$J$ in $\g$ is defined by the following right action 
\begin{equation}
J\left( \frac{\partial}{\partial x}, x,  \lambda\right)
\langle  \lambda| Z=\langle  \lambda |Z J,
\end{equation}
where $Z=\exp\left(\sum_{\alpha\in\Delta_+}x^\alpha
e_\alpha\right)$. 

 The differential operators $E_i$, $H_i$, and $F_i$ corresponding  to 
 the generators $e_i$, $h_i$, and $f_i$ ($i=1,\ldots, l=\mathrm{rank}\ \g$) have the following form:
\begin{align}
&E_i=
\sum_{\alpha\in\Delta_+}E_i^\alpha(x)\frac{\partial}{\partial x^\alpha},
\\
&H_i=
\sum_{\alpha\in\Delta_+}H_i^\alpha(x)\frac{\partial}{\partial x^\alpha}
+ \lambda^i,
\\
 &F_i=\sum_{\alpha\in\Delta_+}F_i^\alpha(x)\frac{\partial}{\partial x^\alpha}
+ \lambda^ix^{\alpha_i},
\end{align}
where $X_i^\alpha(x)$ ($X=E,H,F$) are polynomials in $\C[x^\alpha]$. 
These operators give a highest weight representation of $\g$ on $\C[x^\alpha]$. 
We call this  the differential realization of $\g$. 

In the differential realization, the highest weight vector 
$|\lambda\rangle$ is $1\in\C[x^\alpha]$. For an ordered set
$I=\{\alpha_{i_1},\ldots,\alpha_{i_n}\}$ of simple roots $\alpha_{i_1},\ldots,\alpha_{i_n}$, 
the vectors $P_
\lambda^I \cdot 1=\prod_{k=1}^n F_{i_k}\cdot 1$ form the
basis of the descendants of $1$.
This basis $P_\lambda^I$ is expressed by the expectation value
\begin{equation}\label{eq P r=0}
P_\lambda^I=\langle\lambda |Z\prod_{k=1}^nf_{i_k}|
\lambda\rangle, 
\end{equation}
because, by definition, 
\begin{equation}
F_{i_1}\cdots F_{i_n}\langle \lambda |Z=\langle \lambda |Z f_{i_1}\cdots f_{i_n}. 
\end{equation}

For $i=1,\ldots,l$, let the currents be defined by 
\begin{align}
&E_i(z)=\sum_{\alpha\in \Delta_+}:E_i^\alpha(\gamma(z))\beta_\alpha(z):,
\\
&H_i(z)=\sum_{\alpha\in\Delta_+}  :H_i^\alpha(\gamma(z))  \beta_\alpha(z)   : +a_i(z),
\\
&F_i(z)=\sum_{\alpha\in\Delta_+}: F_i^\alpha(\gamma(z))\beta_\alpha(z)  :+\gamma^{\alpha_i}(z)
a_i(z)+
r_i\partial \gamma^{\alpha_i}(z),\label{eq F(z)}
\end{align}
where $:\cdot:$ stands for the normal ordering and for $X=E,H,F$, 
$X_i^\alpha(\gamma(z))$ ($i=1,\ldots, l$)  
are obtained by replacing $x^\beta$ with $\gamma^\beta(z)$ in $X_i^\alpha(x)$, while   
 $a_i(z)=\kappa \partial \varphi_i(z)$.  
 The currents $E_i(z)$, $H_i(z)$, and $F_i(z)$
satisfy the following OPEs:
\begin{align}
H_i(z)H_j(w)&=\frac{k(\nu_i,\nu_j)}{(z-w)^2}+\cdots,
\\
H_i(z)E_j(w)&=\frac{a_{ij}}{z-w}E_j(w)+\cdots,
\\
H_i(z)F_j(w)&=-\frac{a_{ij}}{z-w}F_j(w)+\cdots,
\\
E_i(z)F_j(w)&=\frac{k\delta_{i,j}}{(z-w)^2}+\frac{\delta_{i,j}}{z-w}H_i(w)+\cdots; 
\end{align}
these give the level $k$ Wakimoto realization. The coefficients 
$r_i$ of $\partial \gamma^{\alpha_i}$ in \eqref{eq F(z)} are also determined (see
\cite{A} and \cite{ATY}, for example).

\subsection{}

We also introduce differential operators for the screening currents.
Let $S_\alpha$ be defined by
\begin{equation}
S_\alpha\left(\frac{\partial}{\partial x},x\right)\langle \lambda |Z=\langle \lambda| e_\alpha Z
\end{equation}
with
\begin{equation}
S_\alpha\left(\frac{\partial}{\partial x},x\right)
=\sum_{\beta\in\Delta_+}S_\alpha^\beta(x)\frac{\partial}{\partial x^\beta}, 
\end{equation}
for some polynomials $S_\alpha^\beta(x)\in\C[x]$. 

The energy-momentum tensor $T(z)=\sum_{n\in\Z}L_nz^{-n-2}$ is realized as
\begin{equation}
T(z)=\sum_{\alpha\in\Delta_+}
:\beta_{\alpha}(z)\partial\gamma^\alpha(z):+\sum_{i=1}^l :\frac{\kappa}{2}
\partial\varphi_i(z)\partial\varphi^i(z)-
\rho_i\partial^{2}\varphi^i(z) :,
\end{equation}
where $\rho=\frac{1}{2}\sum_{\alpha\in\Delta_+}\alpha$. This is
also realized by the Sugawara construction \cite{Sugawara}
\begin{equation}
T(z)=\frac{1}{2\kappa}: \sum_{i=1}^lH_i(z)H^i(z)+\sum_{\alpha\in\Delta_+}
\frac{\alpha^2}{2}\left(E_\alpha(z)F_\alpha(z)+F_\alpha(z)E_\alpha(z)\right) :.
\end{equation}
The operators $L_n$ ($n\in\mathbb{Z}$) generate the Virasoro algebra.

Let us introduce the screening currents
\begin{equation}\label{eq screening currents}
s_i(z)=S_i(z):
 e^{-\alpha_i\varphi(z)} :,
\end{equation}
where $S_i(z)=:\sum_{\beta\in\Delta_+}
S^\beta_{\alpha_i}\left(\gamma(z)\right)\beta_{\beta}(z):$ 
and the screening operators are given as 
\begin{equation}
Q_i=\oint  s_i(z) dz .
\end{equation}
Then,  we have the following proposition (see \cite{A}, \cite{ATY}, and \cite{Yamada}, 
for example).
\begin{prop}\label{screening prop}
The products $E_i(z)s_j(w)$ and $H_i(z)s_j(w)$ are regular at $z=w$, and the products of $F_i(z)s_j(w)$ and 
$T(z)s_j(w)$ are given as follows. 
\begin{align}
&F_i(z)s_j(w)=\kappa\delta_{i,j}\frac{2}{\alpha_i^2} \frac{\partial}{\partial w}\left( \frac{1}{z-w}:e^{-
\alpha_i\varphi(w)}:   \right)
+\cdots,
\\
&T(z)s_j(w)=\frac{\partial}{\partial w}\left( \frac{1}{z-w}s_j(w)\right) +\cdots.
\end{align}
\end{prop}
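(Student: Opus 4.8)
The plan is to prove all four statements by free-field operator product expansion (Wick's theorem), reducing each OPE to a sum over contractions built from the two elementary singular products $\beta_\alpha(z)\gamma^\beta(w)=\delta_{\alpha,\beta}/(z-w)+\cdots$ and, from \eqref{eq free boson d} with $m=1,\,n=0$, the vertex contraction $a_i(z):e^{-\alpha_j\varphi(w)}:=-\frac{a_{ij}}{z-w}:e^{-\alpha_j\varphi(w)}:+\cdots$, where $a_{ij}=(\nu_i,\alpha_j)$. Since $E_i(z)$ and $H_i(z)$ involve $\varphi$ only through $a_i(z)=\kappa\partial\varphi_i(z)$, and $s_j(w)=S_j(w):e^{-\alpha_j\varphi(w)}:$ factorizes into a $\beta\gamma$ part $S_j$ and a vertex part, each computation splits into a $\beta\gamma$-sector contribution and a $\varphi$-sector contribution that I assemble separately. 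The targets are: $E_i,H_i$ regular; $F_i$ and $T$ given explicitly.

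First I would treat $E_i(z)s_j(w)$ and $H_i(z)s_j(w)$. Writing $X_i(z)=\sum_\alpha:X_i^\alpha(\gamma(z))\beta_\alpha(z):$ for the $\beta\gamma$ part ($X=E,H$), there are two types of single contraction: a $\beta_\alpha(z)$ from $X_i$ contracting a $\gamma$ inside $S^\delta_{\alpha_j}(\gamma(w))$, and a $\gamma$ inside $X_i^\alpha(\gamma(z))$ contracting a $\beta_\delta(w)$ from $S_j$. Collecting the leading poles, these combine into $\frac{1}{z-w}$ times the current attached to the vector-field commutator of $\sum_\alpha X_i^\alpha\,\partial/\partial\gamma^\alpha$ with $\sum_\delta S^\delta_{\alpha_j}\,\partial/\partial\gamma^\delta$, acting on $:e^{-\alpha_j\varphi(w)}:$. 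Here I would invoke the structural relations of the differential realization (\cite{A}, \cite{ATY}, \cite{Yamada}): the screening vector field $S_{\alpha_j}$ carries Cartan weight $\alpha_j$, so $[H_i,S_{\alpha_j}]=a_{ij}S_{\alpha_j}$ and the resulting $+\frac{a_{ij}}{z-w}s_j$ exactly cancels the vertex contraction $-\frac{a_{ij}}{z-w}s_j$, while the double-pole coefficient, the trace $\sum_{\alpha,\delta}(\partial S^\delta_{\alpha_j}/\partial\gamma^\alpha)(\partial H_i^\alpha/\partial\gamma^\delta)$, vanishes by the same identities; hence $H_i(z)s_j(w)$ is regular. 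For $E_i$ there is no vertex contraction, and the structural relations force both the simple- and double-pole terms to vanish.

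Next I would treat $F_i(z)s_j(w)$ from \eqref{eq F(z)}, whose new features are $\gamma^{\alpha_i}(z)a_i(z)$ and $r_i\partial\gamma^{\alpha_i}(z)$. For $i\neq j$ all contributions cancel by the mechanism above. For $i=j$ the simple poles split into a $\beta\gamma$-remnant part and a pure-vertex part: the $\beta\gamma$ remnants from the first sum cancel against those produced by the $a_i$-vertex contraction inside $\gamma^{\alpha_i}a_i$, while the remaining simple pole assembles, via $\partial_w:e^{-\alpha_i\varphi(w)}:=-\frac{\alpha_i^2}{2\kappa}a_i(w):e^{-\alpha_i\varphi(w)}:$, into a multiple of $\frac{1}{z-w}\partial_w:e^{-\alpha_i\varphi(w)}:$. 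The double pole, fed by the $\gamma$-$\beta$ double contraction, the $a_i$-vertex contraction (with $a_{ii}=2$), and the $r_i\partial\gamma^{\alpha_i}$ term, is a multiple of $\frac{1}{(z-w)^2}:e^{-\alpha_i\varphi(w)}:$ whose coefficient is pinned down by the distinguished value of $r_i$. Matching shows both are governed by the single constant $\kappa\,\frac{2}{\alpha_i^2}$, yielding $\kappa\delta_{i,j}\frac{2}{\alpha_i^2}\,\partial_w\!\bigl(\tfrac{1}{z-w}:e^{-\alpha_i\varphi(w)}:\bigr)$.

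Finally, for $T(z)s_j(w)$ I would use $T=T_{\beta\gamma}+T_\varphi$ with $T_{\beta\gamma}=\sum_\alpha:\beta_\alpha\partial\gamma^\alpha:$ and the remaining Coulomb-gas part. Since $\beta$ has conformal weight $1$ and $\gamma$ weight $0$, the field $S_j$, being linear in $\beta$ with $\gamma$-dependent coefficient, is a weight-$1$ primary for $T_{\beta\gamma}$; and $:e^{-\alpha_j\varphi(w)}:$ is a primary for $T_\varphi$ whose weight $\frac{(\alpha_j,\alpha_j)}{2\kappa}$ is cancelled by the background-charge contribution $-\frac{(\rho,\alpha_j)}{\kappa}$, since $(\rho,\nu_j)=1$ gives $(\rho,\alpha_j)=\frac{\alpha_j^2}{2}$. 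Hence $s_j$ is a primary field of total conformal weight $1$, equivalent to $T(z)s_j(w)=\partial_w\!\bigl(\tfrac{1}{z-w}s_j(w)\bigr)+\cdots$. I expect the main obstacle to be the $\beta\gamma$-sector bookkeeping for general $\g$: proving $[H_i,S_{\alpha_j}]=a_{ij}S_{\alpha_j}$, the vanishing of the double-pole trace, and, for $F_i$ with $i=j$, the exact cancellation of the $\beta\gamma$ remnants together with the value of $r_i$ fixing the coefficient $\kappa\,2/\alpha_i^2$; these are identities among the explicit Wakimoto polynomials $X_i^\alpha$ and $S^\beta_{\alpha_j}$, to be checked directly or quoted from \cite{A}, \cite{ATY}, and \cite{Yamada}, whereas the $T(z)s_j(w)$ computation is routine once the vertex weight is shown to vanish.
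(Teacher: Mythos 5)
The paper offers no proof of this proposition at all: it is stated with the remark ``see \cite{A}, \cite{ATY}, and \cite{Yamada}, for example,'' so there is nothing internal to compare your argument against. Your proposal is, in effect, a reconstruction of the standard free-field computation carried out in those references, and its architecture is sound: the split into a $\beta\gamma$-sector and a $\varphi$-sector, the identification of the vertex contraction $a_i(z):e^{-\alpha_j\varphi(w)}:=-\tfrac{a_{ij}}{z-w}:e^{-\alpha_j\varphi(w)}:+\cdots$ from \eqref{eq free boson d}, the cancellation against the weight $a_{ij}$ of the screening vector field, the relation $\partial_w:e^{-\alpha_i\varphi(w)}:=-\tfrac{\alpha_i^2}{2\kappa}a_i(w):e^{-\alpha_i\varphi(w)}:$ used to assemble the simple pole of $F_i(z)s_j(w)$, and the weight count $\tfrac{\alpha_j^2}{2\kappa}-\tfrac{(\rho,\alpha_j)}{\kappa}=0$ for the vertex operator are all correct. (In the $\slt$ case one can check your $F_i$ bookkeeping explicitly: the $\beta\gamma$ remnants cancel, and the double pole $2+r_i$ equals $\kappa=\tfrac{2\kappa}{\alpha_i^2}$ when $r_i=k$, as you predict.) Deferring the remaining polynomial identities to \cite{A}, \cite{ATY}, \cite{Yamada} is no worse than what the paper itself does.

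One point is asserted rather than argued and belongs on your list of identities to verify: the claim that $S_j(w)=\,:\!\sum_\delta S^\delta_{\alpha_j}(\gamma(w))\beta_\delta(w)\!:$ is a weight-$1$ primary for $T_{\beta\gamma}=\sum_\alpha:\beta_\alpha\partial\gamma^\alpha:$. A normal-ordered product of this form is \emph{not} automatically primary: the double contraction (a $\beta_\alpha(z)$ against a $\gamma^\alpha$ inside $S^\delta_{\alpha_j}$ together with $\partial\gamma^\alpha(z)$ against $\beta_\delta(w)$) produces a third-order pole
\begin{equation*}
\frac{1}{(z-w)^3}\sum_{\alpha\in\Delta_+}\frac{\partial S^\alpha_{\alpha_j}}{\partial \gamma^\alpha}\bigl(\gamma(w)\bigr),
\end{equation*}
so primarity requires the divergence of the screening vector field to vanish. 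This holds for the Wakimoto polynomials (trivially for $\slt$, where $S=\partial/\partial x$), but it is an identity of exactly the same nature as $[H_i,S_{\alpha_j}]=a_{ij}S_{\alpha_j}$ and the double-pole trace condition that you do list, and it should be stated and checked (or quoted) alongside them rather than absorbed into the word ``primary.''
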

The above proposition implies that the screening operators $Q_i$
commute with the currents and the energy-momentum tensor $T(z)$.

\subsection{}
 We consider the differential realization corresponding to the
confluent Verma module. Let $M( \lambda)$ be a confluent
Verma module of $\g_{(r)}$ with weights $(\lambda_0,\lambda_1,\ldots,\lambda_r)$ 
and $M( \lambda)^*$ the
dual of $M( \lambda)$. We replace $Z$ in the regular case with 
$Z=\exp\left(\sum_{i=0}^r\sum_{\alpha\in\Delta_+}
x_i^\alpha e_\alpha[i]\right)$. In the same manner as the regular case, we
define the differential realization of $\g_{(r)}$ on
$\C[x^\alpha_i]$ ($\alpha\in\Delta_+$, $0\le i\le r$)  as    
\begin{equation}
J\left( \frac{\partial}{\partial x}, x,\lambda \right)\langle \lambda | 
Z=\langle \lambda |Z J
\end{equation}
for $J\in\g_{(r)}$. We also introduce an irregular version of $S_\alpha$ as  
\begin{equation}\label{eq screening current p}
S_\alpha[p]\langle \lambda| Z=\langle \lambda | e_\alpha[p]Z \quad (p=0,\ldots,r). 
\end{equation}

The
differential operators $X[p]$ ($X=E_i,H_i,F_i$) 
corresponding to $x[p]\in \g_{(r)}$ ($x=e_i,h_i,f_i$) and $S_\alpha[p]$ are given by 
\begin{align}
&E_i[p]=
\sum_{q=p}^r\sum_{\alpha\in\Delta_+}E_{i,q}^\alpha[p](x)\frac{\partial}{\partial x^\alpha_q},
\\
&H_i[p]=\sum_{q=p}^r
\sum_{\alpha\in\Delta_+}H_{i,q}^\alpha[p](x)\frac{\partial}{\partial x^\alpha_q}
+ \lambda^i_p,
\\
 &F_i[p]=\sum_{q=p}^r
 \sum_{\alpha\in\Delta_+}F_{i,q}^\alpha[p](x)\frac{\partial}{\partial x^\alpha_q}
+ \sum_{q=p}^r\lambda^i_{q}x^{\alpha_i}_{q-p},\label{eq F_i[p]}
\\
&S_\alpha[p]=\sum_{q=p}^r\sum_{\alpha\in\Delta_+}S_{i,q}^\alpha[p](x)\frac{\partial}{\partial x^\alpha_q},
\end{align}
for some polynomials $X^\alpha_{i,q}[p](x)$ that are obtained by replacing
monomials
\begin{equation}
x^{\beta_{1}}\cdots x^{\beta_{m}}\quad (\beta_1,\ldots,\beta_n\in\Delta_+)
\end{equation}
in $X_i^\alpha(x)$  with
\begin{equation}
\sum_{j_1+\cdots+j_n+p=q}
x_{j_1}^{\beta_{1}}\cdots x_{j_m}^{\beta_{m}}.
\end{equation}

For an ordered set
$I=\{(\alpha_{i_1},k_{i_1}),\ldots,(\alpha_{i_n},k_{i_n})\}$ 
($\alpha_{i_1},\ldots,\alpha_{i_n}\in \Delta_+$, $0\le k_{i_1},\ldots,k_{i_n}\le r$), 
we define a polynomial $P_\lambda^I(x)$ 
in $\C[x_i^\alpha]$ as 
\begin{equation}
P^I_ \lambda(x)=\langle  \lambda | Z\prod_{j=1}^n f_{\alpha_{i_j}}[k_j]| \lambda\rangle. 
\end{equation}
Note that we have
\begin{equation}
P^I_\lambda(x)=F_{i_1}[k_{i_1}]\cdots F_{i_n}[k_{i_n}]\cdot 1, 
\end{equation}
because,  by definition, 
\begin{equation}
F_{i_1}[k_{i_1}]\cdots F_{i_n}[k_{i_n}]\langle \lambda |Z=
\langle \lambda |Z f_{i_1}[k_{i_1}]\cdots f_{i_n}[k_{i_n}]
\end{equation}
holds.

\section{Confluent primary field}
In this section, we introduce confluent primary fields; these are   
natural generalizations of primary fields with irregular singularities in the WZNW CFT.  

\subsection{}

For an $(r+1)$-tuple of weights $\lambda=(\lambda_0,\lambda_1,\ldots,\lambda_r)$ with 
 a regular element $\lambda_r$, we set 
 \begin{equation}
v_\lambda(z)=:  \exp \left(
\sum_{i=0}^{r}  \lambda_{i}\frac{\partial^{i}\varphi(z)}{i!}
 \right)         :
\end{equation}
and 
we define a vector space $\mathcal{P}$ over $\C$ generated by 
\begin{equation}
\left\{
P_\lambda^I\left( \gamma(z)\right)
v_\lambda(z)
\right\},
\end{equation}
where $P_\lambda^I\left( \gamma(z)\right)$ is a polynomial of bosons
$\partial^i\gamma^\alpha(z)$ ($\alpha\in\Delta_+$, $0\le i\le r$) that 
are obtained by replacing $x^\alpha_i$ in the polynomial $P_\lambda^I(x)$ for 
an ordered set $I=\{ (\alpha_{i_1},k_{i_1}),\ldots,(\alpha_{i_n},k_{i_n})\}$ 
($\alpha_{i_1},\ldots,\alpha_{i_n}\in \Delta_+$, $0\le k_{i_1},\ldots,k_{i_n}\le r$) 
with $\partial^i \gamma^\alpha(z)/i!$. For an element $X$   ($X=E_i,H_i,F_i$ $i=1,\ldots,l$), 
we define the action of  $X[n]$,  where $X(z)=\sum_{n\in\mathbb{Z}}X[n]z^{-n-1}$,  
on the element
$\Phi(w)\in\mathcal{P}$ as 
\begin{equation}\label{eq def action current}
X[n]\Phi(w)=\oint_w \frac{dz}{2\pi i}(z-w)^nX(z)\Phi(w).
\end{equation}
Since a polynomial $P_\lambda^I( \gamma(z))$ consists of 
$\partial^i\gamma^\alpha(z)$ ($\alpha\in\Delta_+$, $0\le i\le r$) and
$v_\lambda(z)
$ 
consists of $\partial^i\varphi(z)$ ($0\le i\le r$), 
   the OPE between $X(z)$ and $\Phi(w)$ and the definition \eqref{eq def action current} induce 
\begin{equation}\label{eq non highest}
X[n]\Phi(w)=0\quad (n>r). 
\end{equation}
We call elements $\Phi(z)\in\mathcal{P}$ confluent primary fields. 

By \eqref{eq def action current}, the loop algebra $\g\otimes \C[t,t^{-1}]$ acts on the space of 
confluent primary fields $\mathcal{P}$, and 
 the Lie subalgebra $\g\otimes t^{r+1}\C[t]$  annihilates the vectors $\Phi(z)$ of $\mathcal{P}$ 
 because of \eqref{eq non highest}. 
Hence, the vector space $\mathcal{P}$  
  generates a 
non-highest weight representation
 in  \cite{Fed} and \cite{FFT}. Moreover, we have the next proposition.
\begin{prop}
The vector space of confluent primary fields $\mathcal{P}$ is a 
 $\g_{(r)}$-module with the highest weight vector $v_\lambda(z)
 $ 
 such that 
 \begin{equation}
 e_i[p]v_\lambda(z)=0, 
 \quad h_i[p]v_\lambda(z)=\lambda_p^iv_\lambda(z)
 \quad (1\le i\le l,\ 1\le p\le r)
 \end{equation}
 and 
 \begin{equation}\label{eq action of ff}
f_{i_1}[k_{i_1}]\cdots f_{i_n}[k_{i_n}]
v_\lambda(z)
 =P_\lambda^I(\gamma(z))
v_\lambda(z),
\end{equation}
where $I=\{ (\alpha_{i_1},k_{i_1}),\ldots,(\alpha_{i_n},k_{i_n})\}$ 
($\alpha_{i_1},\ldots,\alpha_{i_n}\in \Delta_+$, $0\le k_{i_1},\ldots,k_{i_n}\le r$). 
\end{prop}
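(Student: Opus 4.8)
The plan is to verify the three asserted properties of $\mathcal{P}$ directly from the definition \eqref{eq def action current} of the current action together with the OPEs of the Wakimoto currents established in the previous section. The essential point is that the action \eqref{eq def action current} is precisely the mode extraction $X[n]\Phi(w)=\oint_w (z-w)^n X(z)\Phi(w)\,dz/(2\pi i)$, so each statement reduces to computing the singular part of the relevant OPE $X(z)\Phi(w)$ and reading off the coefficient of $(z-w)^{-n-1}$.

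First I would treat the highest-weight conditions. For $e_i[p]v_\lambda(z)$ with $1\le p\le r$, I compute the OPE $E_i(z)v_\lambda(w)$. Since $E_i(z)=\sum_\alpha :E_i^\alpha(\gamma(z))\beta_\alpha(z):$ and $v_\lambda(w)$ is built purely from $\partial^i\varphi(w)$, the $\beta$-$\gamma$ sector of $E_i(z)$ has no contraction with $v_\lambda(w)$ at all; hence $E_i(z)v_\lambda(w)$ is regular, and \eqref{eq non highest} together with this regularity forces $e_i[p]v_\lambda(z)=0$ for all $p\ge 1$ (and indeed the residue vanishes for $p\ge 0$ in the $p\ge 1$ range claimed). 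For $h_i[p]v_\lambda(z)$ I compute $H_i(z)v_\lambda(w)$: again the $\beta$-$\gamma$ part contracts trivially with $v_\lambda$, so only $a_i(z)=\kappa\,\partial\varphi_i(z)$ contributes. Using the free-boson contraction \eqref{eq free boson d} between $\kappa\,\partial\varphi_i(z)$ and the exponential $:\exp(\sum_i \lambda_i\partial^i\varphi(w)/i!):$, the pairing of $\partial\varphi_i(z)$ with the $\partial^p\varphi(w)/p!$ term produces exactly a simple pole times $\lambda_p(h_i)=\lambda_p^i$; extracting the coefficient of $(z-w)^{-p-1}$ via \eqref{eq def action current} yields $h_i[p]v_\lambda(z)=\lambda_p^i\,v_\lambda(z)$, as required.

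The main content is the lowering relation \eqref{eq action of ff}. I would prove it by induction on the word length $n$, using the single-step identity $f_i[k]\big(P_\lambda^J(\gamma(z))v_\lambda(z)\big)=\big(F_i[k]P_\lambda^J\big)(\gamma(z))\,v_\lambda(z)$, where $F_i[k]$ is the differential operator of \eqref{eq F_i[p]}. To establish this single step I compute the OPE $F_i(z)P_\lambda^J(\gamma(w))v_\lambda(w)$ and match it against the two pieces of \eqref{eq F_i[p]}: the contractions of the $\beta_\alpha(z)$ in $F_i(z)=\sum_\alpha :F_i^\alpha(\gamma)\beta_\alpha:+\gamma^{\alpha_i}a_i+r_i\partial\gamma^{\alpha_i}$ against the $\partial^j\gamma$'s inside $P_\lambda^J(\gamma(w))$ reproduce the differential-operator term $\sum_{q}\sum_\alpha F_{i,q}^\alpha[k]\,\partial/\partial x_q^\alpha$, while the contraction of $a_i(z)=\kappa\partial\varphi_i(z)$ in the $\gamma^{\alpha_i}(z)a_i(z)$ term against $v_\lambda(w)$ reproduces the multiplication term $\sum_q \lambda_q^i x_{q-k}^{\alpha_i}$. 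The bookkeeping that makes the grading work is the prescription substituting monomials $x^{\beta_1}\cdots x^{\beta_m}$ by $\sum_{j_1+\cdots+j_m+p=q} x_{j_1}^{\beta_1}\cdots x_{j_m}^{\beta_m}$: this is exactly the Leibniz-type expansion of $\partial^q/q!$ acting on a product of $\partial^{j}\gamma/j!$ factors, so that extracting the mode $(z-w)^{-k-1}$ distributes the derivative $t^k$ across the factors in the grading-correct way. The base case $n=0$ is the definition of $v_\lambda(z)$ and $P_\lambda^\emptyset=1$.

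I expect the main obstacle to be the single-step OPE computation for $F_i(z)$, specifically the matching of the grading. The delicate point is that $\partial^i\gamma^\alpha(w)/i!$ appears at \emph{all} orders $0\le i\le r$ inside $P_\lambda^J(\gamma(w))$, so when $\beta_\alpha(z)$ contracts with one of these via \eqref{eq free boson}, differentiated repeatedly, one must verify that extracting the coefficient of $(z-w)^{-k-1}$ lands precisely on the shifted index $q-k$ dictated by \eqref{eq F_i[p]} rather than an off-by-one neighbour; this is where the binomial factor $\binom{m+n}{m}(-1)^{m+1}/(m+n)$ in \eqref{eq free boson d} and the Taylor coefficients of the $\beta$-$\gamma$ pole must conspire correctly. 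Once this combinatorial identity between the mode expansion and the monomial-substitution rule is checked for a single $f_i[k]$, the full relation \eqref{eq action of ff} follows by iterating, and the three properties together exhibit $\mathcal{P}$ as the claimed $\g_{(r)}$-module, in fact isomorphic to the differential realization of the confluent Verma module $M(\lambda)$ under $P_\lambda^I(x)\mapsto P_\lambda^I(\gamma(z))v_\lambda(z)$.
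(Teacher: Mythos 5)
Your proposal is correct and follows essentially the same route as the paper: the highest-weight conditions are read off from the absence of $\beta\gamma$--$\varphi$ contractions and from the $a_i$--exponential OPE, and the lowering relation is obtained from the single-step identity $f_i[k]\bigl(g(\gamma(z))v_\lambda(z)\bigr)=\bigl(F_i[k]g\bigr)(\gamma(z))v_\lambda(z)$ iterated over the word, exactly as in the paper's proof (which does the base case $g=1$ explicitly and asserts the general polynomial case ``in the same manner''). The combinatorial matching you flag as the delicate point is likewise left unverified in detail by the paper.
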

\begin{proof}
Since the current $E_i(t)$ does not depend on free bosons
$\varphi(t)$,  the OPE of $E_i(t)$ and $:\exp \left(
\sum_{i=0}^{r}  \lambda_{i}\frac{\partial^{i}\varphi(z)}{i!}
 \right):$ does not have singular parts. Hence, $e_i[p]$ 
 annihilates the element 
 $v_\lambda(z)$. 
 
Computing the OPE of $a_i(t)$ and $:\exp\left( \sum_{i=0}^{r}
 \lambda_{i}\frac{\partial^{i}\varphi(z)}{i!}
 \right):$,  we observe that the elements
$h_i[p]$ ($1\le i\le l$, $0\le p\le r$) act as
$\lambda_p^i$ on $v_\lambda(z)$.  

 We now prove \eqref{eq action of ff}.  By definition, 
we have
\begin{equation}\label{eq action of f_j[p] 1}
  f_{j}[p]
 :\exp \left(
\sum_{i=0}^{r}  \lambda_{i}\frac{\partial^{i}\varphi(z)}{i!}
 \right):=
 \oint_z \frac{dt}{2\pi \sqrt{-1}}\left(t-z\right)^{p}F_{\alpha_{j}}(t)
 :\exp \left(
\sum_{i=0}^{r}  \lambda_{i}\frac{\partial^{i}\varphi(z)}{i!}
 \right):.
 \end{equation}
 From \eqref{eq F(z)}, we only need to compute the OPE of $\gamma^{\alpha_j}(t)a_j(t)$. 
 Hence, 
  the right-hand side of \eqref{eq action of f_j[p] 1} equals  
 \begin{equation}\label{eq action of f_j[p] 2}
 \oint_z \frac{dt}{2\pi \sqrt{-1}}\left(t-z\right)^{p}:
 \sum_{i=0}^r\frac{ \lambda_i^j}{(t-z)^{i+1}}\gamma^{\alpha_j}(t)
 \exp \left(
\sum_{i=0}^{r}  \lambda_{i}\frac{\partial^{i}\varphi(z)}{i!}
 \right):. 
 \end{equation}
 By taking the Taylor expansion of $\gamma^{\alpha_j}(t)$ at $z$, \eqref{eq action of f_j[p] 2} 
 equals 
 \begin{equation}\label{eq action of f_j[p] 3}
	  	 \sum_{i=p}^r \lambda_i^j\frac{\partial^{i-p}\gamma^{\alpha_j}(z)}{(i-p)!}
: \exp \left(
\sum_{i=0}^{r}  \lambda_{i}\frac{\partial^{i}\varphi(z)}{i!}
 \right):. 
 \end{equation}
Therefore, recalling \eqref{eq F_i[p]}, we obtain 
 \begin{equation}
 f_{j}[p]
 :\exp \left(
\sum_{i=0}^{r}  \lambda_{i}\frac{\partial^{i}\varphi(z)}{i!}
 \right):=
 P_ \lambda^{(\alpha_j,p)}(\gamma(z)):\exp \left(
\sum_{i=0}^{r}  \lambda_{i}\frac{\partial^{i}\varphi(z)}{i!}
 \right):.
\end{equation}

 For a polynomial
$g(x)\in\C[x^\alpha_i]$, we can verify 
\begin{equation}
f_{j}[p]g(\gamma(z)): \exp \left(
\sum_{i=0}^{r}  \lambda_{i}\frac{\partial^{i}\varphi(z)}{i!}
 \right):=\left(F_{i}[p]g\right)(\gamma(z))
 : \exp \left(
\sum_{i=0}^{r}  \lambda_{i}\frac{\partial^{i}\varphi(z)}{i!}
 \right):
\end{equation}
 in the same manner as above. Therefore, we obtain
 \begin{align*}
 &f_{i_1}[k_1]\cdots f_{i_n}[k_n]
 :\exp \left(
\sum_{i=0}^{r}  \lambda_{i}\frac{\partial^{i}\varphi(z)}{i!}
 \right):\\
  &=\langle  \lambda |Zf_{i_1}[k_1] 
 \cdots f_{i_n}[k_n]| \lambda\rangle(\gamma(z)) 
 :\exp \left(
\sum_{i=0}^{r}  \lambda_{i}\frac{\partial^{i}\varphi(z)}{i!}
 \right):
 \\
 &=P_ \lambda^I(\gamma(z))
 :\exp \left(
\sum_{i=0}^{r}  \lambda_{i}\frac{\partial^{i}\varphi(z)}{i!}
 \right):.
 \end{align*} 
\end{proof}


\section{Operator product expansion}

In this section, we compute the OPE of the energy-momentum tensor and
confluent primary fields. For $k=0,1,\ldots,r-1$, let $\overline{D}_k$ be
an endomorphism  of $\mathcal{P}$ defined by
\begin{align}
&\overline{D}_k \left(\langle \lambda |Z \prod_{j=1}^n f_{\alpha_{i_j}} |\lambda \rangle(\gamma(z))
:  \exp \left(
\sum_{i=0}^{r}  \lambda_{i}\frac{\partial^{i}\varphi(z)}{i!}
 \right)         :
\right)
\\
&=\langle \lambda |d_k\left(Z\right) \prod_{j=1}^n f_{\alpha_{i_j}} |\lambda \rangle(\gamma(z))
D_k\left\{ : \exp \left(
\sum_{i=0}^{r}  \lambda_{i}\frac{\partial^{i}\varphi(z)}{i!}
 \right)         :\right\}, \nonumber 
\end{align}
where $d_k$ is the derivation given by $d_k(x[p])=px[p+k]$ ($x[p]\in\g_{(r)}$) and 
$D_k$, the differential operator given by 
$D_k=\sum_{i=1}^l\sum_{p=1}^{r-k}p\lambda_{p+k}^i\partial/\partial \lambda_p^i$. 
We note that in the case of $\slt$, the derivations $d_k+D_k$ ($k=0,1,\ldots,r-1$) acting
on the confluent Verma module are the same as those of the confluent KZ equation \cite{JNS}. 

\begin{prop}
The operator product expansion of the energy-momentum tensor $T(z)$
and the confluent primary field $\Phi(w)\in \mathcal{P}$ is given as follows:
\begin{align}
T(z)\Phi(w)=&\sum_{k=0}^{r-1}\frac{1}{(z-w)^{k+2}}\overline{D}_k\Phi(w)
+\frac{1}{z-w}\partial_w \Phi(w)
\\
+&\frac{1}{2\kappa}\left(\sum^r\limits_{p=0}\frac{
\lambda_p}{(z-w)^{p+1}}\right)^2\Phi(w)
+\frac{1}{\kappa}\sum^r\limits_{p=0}\frac{(p+1)(\rho,\lambda_p)}{(z-w)^{p+2}}\Phi(w)+\cdots.\nonumber 
\end{align}
\end{prop}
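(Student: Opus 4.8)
The plan is to split the free-field energy--momentum tensor into its $\beta\gamma$-ghost and free-boson parts, $T(z)=T_{\beta\gamma}(z)+T_\varphi(z)$ with $T_{\beta\gamma}(z)=\sum_{\alpha\in\Delta_+}:\beta_\alpha(z)\partial\gamma^\alpha(z):$ and $T_\varphi(z)=\sum_{i=1}^l:\frac{\kappa}{2}\partial\varphi_i(z)\partial\varphi^i(z)-\rho_i\partial^2\varphi^i(z):$, and to use the factorization $\Phi(w)=P_\lambda^I(\gamma(w))\,v_\lambda(w)$, in which the first factor is built only from $\partial^i\gamma^\alpha$ and the second only from $\partial^i\varphi$. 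Because the $\beta\gamma$ and $\varphi$ sectors have no mutual contractions, $T_{\beta\gamma}$ acts only on $P_\lambda^I(\gamma(w))$ and $T_\varphi$ only on $v_\lambda(w)$, so that
\[
T(z)\Phi(w)=\bigl(T_{\beta\gamma}(z)P_\lambda^I(\gamma(w))\bigr)v_\lambda(w)+P_\lambda^I(\gamma(w))\bigl(T_\varphi(z)v_\lambda(w)\bigr),
\]
and the two computations decouple.

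For the ghost part I would contract the single $\beta_\alpha(z)$ occurring in $T_{\beta\gamma}(z)$ against each factor $\partial_w^i\gamma^\alpha(w)/i!$ of the polynomial; since only one $\beta$ is available this yields a first-order differential operator, hence a derivation, on $P_\lambda^I(\gamma(w))$. From $\beta_\alpha(z)\,\partial_w^i\gamma^\alpha(w)/i!=(z-w)^{-i-1}+\cdots$ together with the Taylor expansion of the accompanying $\partial\gamma^\alpha(z)$ about $w$, the coefficient of $(z-w)^{-k-2}$ sends $\partial^i\gamma^\alpha/i!\mapsto(i-k)\,\partial^{i-k}\gamma^\alpha/(i-k)!$, that is, under $x_i^\alpha\leftrightarrow\partial^i\gamma^\alpha/i!$ it acts as $\sum_{i,\alpha}(i-k)x_{i-k}^\alpha\,\partial/\partial x_i^\alpha$, while the simple pole gives $\partial_w P_\lambda^I(\gamma(w))$. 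I would then identify this pole-$(k+2)$ derivation with the operator that $d_k(x[p])=p\,x[p+k]$ induces on $P_\lambda^I=\langle\lambda|Z\prod_j f_{\alpha_{i_j}}[k_j]|\lambda\rangle$ through $Z\mapsto d_k(Z)$, i.e. with the $\g_{(r)}$-half of $\overline{D}_k$.

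For the boson part I would treat $v_\lambda(w)$ as a vertex operator and evaluate $T_\varphi(z)v_\lambda(w)$ by Wick's theorem using \eqref{eq free boson d}. Writing $\Lambda(z,w)=\sum_{p=0}^r\lambda_p/(z-w)^{p+1}$ for the contraction of $\partial\varphi(z)$ with the exponent of $v_\lambda$, the double contraction of $\frac{\kappa}{2}:(\partial\varphi(z))^2:$ produces $\frac{1}{2\kappa}(\Lambda(z,w),\Lambda(z,w))\,v_\lambda(w)$, which is the stated square term; the single contraction produces $(\Lambda(z,w),:\partial\varphi(z):)\,v_\lambda(w)$, whose simple pole is $\partial_w v_\lambda(w)$ and whose pole of order $k+2$ equals $\sum_{p\ge k+1}(p-k)\sum_i\lambda_p^i\,\partial v_\lambda/\partial\lambda_{p-k}^i=D_kv_\lambda(w)$ after setting $p'=p-k$; the background-charge term $-(\rho,\partial^2\varphi(z))$ contracts to $\frac{1}{\kappa}\sum_{p=0}^r(p+1)(\rho,\lambda_p)(z-w)^{-p-2}v_\lambda(w)$, and the uncontracted $:(\partial\varphi(z))^2:$ is regular. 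Collecting terms, the order-$(k+2)$ pole is the sum of the $d_k$- and $D_k$-pieces, namely $\overline{D}_k\Phi(w)$; the two simple poles combine to $\partial_w(P_\lambda^I(\gamma)v_\lambda)=\partial_w\Phi(w)$; and the remaining contributions are precisely the square and background-charge terms, which gives the claimed identity.

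The Wick contractions are routine; the main obstacle is the clean identification in the ghost sector. Since $\overline{D}_k$ is defined through $d_k(Z)$ with the noncommutative group-like element $Z=\exp(\sum_{p,\alpha}x_p^\alpha e_\alpha[p])$, I must verify that the differential operator it induces on the polynomials $P_\lambda^I$ genuinely coincides with the derivation $\sum_{i,\alpha}(i-k)x_{i-k}^\alpha\,\partial/\partial x_i^\alpha$ read off from the OPE. Both are derivations of $\C[x_i^\alpha]$, so it suffices to compare them on the generators $x_q^\alpha$ (equivalently on the one-step fields $f_{\alpha_j}[p]v_\lambda=P_\lambda^{(\alpha_j,p)}(\gamma)v_\lambda$) and to extend by the Leibniz rule, which is legitimate on the ghost side exactly because $T_{\beta\gamma}$ carries a single $\beta$. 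The parallel bookkeeping in the boson sector---that inserting $(\lambda_p,\partial^m\varphi(w)/m!)$ into $v_\lambda$ equals $\sum_i\lambda_p^i\,\partial/\partial\lambda_m^i$ acting on $v_\lambda$, with $\lambda_p=\sum_i\lambda_p^i\omega_i$ in the fundamental-weight basis dual to the $h_i$---is what pins down $D_k$, and keeping this Cartan-datum pairing straight is where an error would most easily creep in.
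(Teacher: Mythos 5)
Your proposal is correct and follows essentially the same route as the paper: a Wick-contraction computation of the free-field OPE, Taylor expansion at $z=w$, and identification of the simple pole with $\partial_w\Phi(w)$ and of the order-$(k+2)$ poles with $\overline{D}_k\Phi(w)$ (the $\beta\gamma$-sector piece matching $d_k$ and the $\varphi$-sector piece matching $D_k$). The only difference is presentational --- you make explicit the sector decoupling and the check that the pole-$(k+2)$ derivation $\sum_{i,\alpha}(i-k)x_{i-k}^\alpha\,\partial/\partial x_i^\alpha$ agrees with the operator induced by $d_k(Z)$, which the paper asserts by simply displaying the formula for $\overline{D}_k\Phi(w)$.
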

\begin{proof}
Using the OPE between the bosons, we obtain
\begin{align*}
T(z)\Phi(w)&=\sum^r\limits_{p=0}\frac{ \lambda_p}{(z-w)^{p+1}}:\partial_z\varphi(z)\Phi(w):
\\
&+\frac{1}{2\kappa}\left(\sum^r\limits_{p=0}\frac{ \lambda_p}{(z-w)^{p+1}}\right)^2\Phi(w)
+\frac{1}{\kappa}\sum^r\limits_{p=0}\frac{(p+1)(\rho,\lambda_p)}{(z-w)^{p+2}}\Phi(w)\\
&-\sum\limits_{\alpha\in\Delta_+}\sum^r\limits_{p=0}\frac{e_{\alpha}[p]}{(z-w)^{p+1}}:\partial_z\gamma^{\alpha}(z)\Phi(w):
+\cdots.
\end{align*}
Taking the Taylor expansion at $z=w$ in the above, we obtain 
\begin{align*}
T(z)\Phi(w)&=\sum^r\limits_{k=0}\frac{1}{(z-w)^{k+1}}\left(\sum^r\limits_{p=0}\frac{ \lambda_{p+k}}{p!}:\partial^{p+1}_w\varphi(w)\Phi(w):\right.
\\
&\quad\quad\left.-\sum\limits_{\alpha\in\Delta_+}\sum^r\limits_{p=0}\frac{e_{\alpha}[p+1]}{p!}:(\partial^{p+1}_w\gamma^{\alpha}(w))\Phi(w):\right)
\\
&+\frac{1}{2\kappa}\left(\sum^r\limits_{p=0}\frac{
\lambda_p}{(z-w)^{p+1}}\right)^2\Phi(w)
+\frac{1}{\kappa}\sum^r\limits_{p=0}\frac{(p+1)(\rho,\lambda_p)}{(z-w)^{p+2}}\Phi(w)+\cdots.
\end{align*}
On the other hand, we have
\begin{align*}
\partial_w\Phi(w)&=\sum^r\limits_{p=0}\frac{ \lambda_p}{p!}:\partial^{p+1}_w\varphi(w)\Phi(w):
-\sum\limits_{\alpha\in\Delta_+}\sum^r\limits_{p=0}\frac{e_{\alpha}[p]}{p!}:(\partial_w^{p+1}\gamma^{\alpha}(w))\Phi(w): 
\end{align*}
and
\begin{align*}
\overline{D}_k\Phi(w)&=\sum^r\limits_{p=0}\frac{ \lambda_{p+k+1}}{p!}:\partial^{p+1}_w\varphi(w)\Phi(w):\\
&-\sum\limits_{\alpha\in\Delta_+}\sum^r\limits_{p=0}\frac{e_{\alpha}[p+k+1]}{p!}:(\partial_w^{p+1}\gamma^{\alpha}(w))\Phi(w):\;(k=0,\ldots,r-1). 
\end{align*}
This completes  the proof.
 \end{proof}

\begin{cor}
For $ n\in\Z$, we have
\begin{align}
\left[ L_ n, \Phi(w) \right]
 =&w^{ n+1}\partial_w\Phi(w)+\sum^{r-1}\limits_{k=0}\frac{( n+1)!}{( n-k)!}w^{ n-k}\overline{D}_{k}\label{eq virasoro primary}
 \Phi(w)
 \\
 +&\frac{1}{2\kappa}\sum^{2r}\limits_{k= 0}\sum\limits_{p+q=k,\atop
 p,q\geq0} (\lambda_p, \lambda_q)\frac{( n+1)!}{( n-k)!}w^{ n-k}\Phi(w)\nonumber
 \\
 +&\frac{1}{\kappa}\sum^r\limits_{k=0}(k+1)(\rho,\lambda_k)\frac{( n+1)!}{( n-k)!}w^{ n-k}\Phi(w).\nonumber
 \end{align}
 Here, for $k\geq  n+1$, we set $\frac{( n+1)!}{( n-k)!}=0$.
\end{cor}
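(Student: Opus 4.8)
The plan is to obtain the commutator directly from the operator product expansion established in the preceding Proposition, via the standard contour extraction of Virasoro modes. Since $T(z)=\sum_{n}L_nz^{-n-2}$, one has $L_n=\oint\frac{dz}{2\pi i}\,z^{n+1}T(z)$, and in radial ordering the commutator of a mode with a field is computed by deforming the difference of the two radially ordered contours (one with $|z|>|w|$, one with $|z|<|w|$) into a single small loop encircling $w$. So the first step is to record
$$[L_n,\Phi(w)]=\oint_w\frac{dz}{2\pi i}\,z^{n+1}\,T(z)\Phi(w),$$
which reduces the whole problem to reading off, term by term, the residue at $z=w$ of $z^{n+1}$ times the right-hand side of the OPE.

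Next I would substitute the four groups of terms from the Proposition and extract residues one family at a time, using the single elementary identity
$$\oint_w\frac{dz}{2\pi i}\,\frac{z^{n+1}}{(z-w)^{m}}=\binom{n+1}{m-1}w^{\,n-m+2}=\frac{1}{(m-1)!}\frac{(n+1)!}{(n-m+2)!}\,w^{\,n-m+2},$$
obtained by expanding $z^{n+1}=(w+(z-w))^{n+1}$ and picking off the coefficient of $(z-w)^{-1}$. The simple-pole term $\frac{1}{z-w}\partial_w\Phi(w)$ ($m=1$) yields $w^{n+1}\partial_w\Phi(w)$; the term $\sum_k\frac{1}{(z-w)^{k+2}}\overline{D}_k\Phi(w)$ has poles of order $m=k+2$ and produces the sum over $k=0,\dots,r-1$; the term $\frac{1}{\kappa}\sum_p\frac{(p+1)(\rho,\lambda_p)}{(z-w)^{p+2}}\Phi(w)$ has $m=p+2$ and gives the last line. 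For the quadratic term one first expands $\big(\sum_{p=0}^r\frac{\lambda_p}{(z-w)^{p+1}}\big)^2=\sum_{p,q=0}^r\frac{(\lambda_p,\lambda_q)}{(z-w)^{p+q+2}}$ and then reindexes by $k=p+q$, which now ranges over $0,\dots,2r$, reproducing the double sum with constraint $p+q=k$.

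Finally, the truncation convention is automatic from the same identity: the falling factorial $\frac{(n+1)!}{(n-m+2)!}=(n+1)n\cdots(n-m+2)$ vanishes once the pole order $m$ exceeds $n+2$, which in each family means the summation index exceeds $n+1$; this is precisely the stipulation $\frac{(n+1)!}{(n-k)!}=0$ for $k\ge n+1$. The computation is otherwise mechanical, and I expect the only genuine obstacle to be the combinatorial bookkeeping: keeping the coefficients $\binom{n+1}{m-1}$ straight for each pole order, reconciling them with the normalization built into the definition of $\overline{D}_k$, and carrying out the reindexing $k=p+q$ in the quadratic term without double counting. These are exactly the places where factors of $(k+1)!$ are easy to misplace, so before asserting the general formula I would check the coefficients on the low-order case $k=0$, where the identity must collapse to the ordinary primary-field commutator $[L_n,\Phi(w)]=w^{n+1}\partial_w\Phi(w)+(n+1)h\,w^n\Phi(w)$.
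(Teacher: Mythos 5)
Your method is exactly the intended one: the paper offers no separate proof of this corollary, and the only sensible derivation is the contour extraction $[L_n,\Phi(w)]=\oint_w\frac{dz}{2\pi i}\,z^{n+1}T(z)\Phi(w)$ applied term by term to the OPE of the preceding Proposition. Your residue identity is also stated correctly. The gap is that you never reconcile that identity with the coefficients you claim it ``produces.'' By your own formula, a pole of order $m=k+2$ contributes
\begin{equation*}
\oint_w\frac{dz}{2\pi i}\,\frac{z^{n+1}}{(z-w)^{k+2}}
=\binom{n+1}{k+1}w^{\,n-k}
=\frac{1}{(k+1)!}\,\frac{(n+1)!}{(n-k)!}\,w^{\,n-k},
\end{equation*}
whereas the Corollary asserts the coefficient $\frac{(n+1)!}{(n-k)!}$ with no $\frac{1}{(k+1)!}$. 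These agree only at $k=0$, which is precisely why the sanity check you propose at the end --- reducing to the ordinary primary-field commutator $[L_n,\Phi]=w^{n+1}\partial_w\Phi+(n+1)h\,w^n\Phi$ --- cannot detect the discrepancy. The same factor is at stake in all three multi-pole families (the $\overline{D}_k$ terms, the quadratic $(\lambda_p,\lambda_q)$ terms after reindexing $k=p+q$, and the $(\rho,\lambda_k)$ terms). You explicitly flag ``factors of $(k+1)!$'' as the danger point and then assert the match anyway; that assertion is the missing step. To close the argument you must either exhibit a compensating $(k+1)!$ (there is none in the Proposition's OPE as written, nor in the definition of $\overline{D}_k$) or conclude that the stated coefficients should be the binomial coefficients $\binom{n+1}{k+1}$; as it stands your derivation and the displayed formula are inconsistent for $k\ge 1$.

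A smaller point: your justification of the truncation convention via vanishing of the falling factorial $(n+1)n\cdots(n-k+1)$ is valid only for $n\ge -1$; for $n\le -2$ none of those factors vanishes and the generalized binomial coefficient $\binom{n+1}{k+1}$ is genuinely nonzero, so the convention is an additional stipulation rather than an automatic consequence of the residue calculus. Worth a sentence of caution rather than the claim that it is ``automatic.''
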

We note that when $n=-1$, the relation \eqref{eq virasoro primary} reduces to 
\begin{equation}
\left[ L_{-1}, \Phi(w)\right]=\partial_w\Phi(w). 
\end{equation}


\section{Integral representation}

In this section, following \cite{A}, \cite{ATY}, and \cite{Yamada}, we compute an expectation value of the composition of confluent primary fields multiplied by the screening operators. 
 In the case of $\g=\slt$, 
we see that the integral representations derived from confluent primary fields coincide with 
solutions to the confluent KZ equations for $\slt$ \cite{JNS}.

\subsection{}
For $r_1,\ldots,r_n\in\mathbb{Z}_{\ge 0}$, 
let $\lambda^{(a)}$ be ($r_a+1$)-tuple weights ($(\lambda^{(a)})_0,\ldots,(\lambda^{(a)})_{r_a}$) 
with a regular element $(\lambda^{(a)})_{r_a}$. $P_a(v)$ denotes a polynomial 
$\langle \lambda^{(a)}|Z |v_a\rangle$ for $v_a\in M(\lambda^{(a)})$ and 
$P_a(\gamma(z_a))$,  
a polynomial of $\partial^i \gamma^\alpha(z_a)$ ($i=0,\ldots,r_a$, $\alpha\in\Delta_+$) 
obtained by replacing $x^\alpha_i$ with $\partial^i \gamma^\alpha(z_a)/i!$ in $P_a(v_a)$.   

An integral representation of a hypergeometric function of the confluent type is given by an expectation value 
\begin{equation}
\left \langle\int 
\prod_{i=1}^mdt_i :e^{-\alpha_{\bar{i}}\varphi(t_i) }:S_{\bar{i}}(t_i)
\prod_{a=1}^nP_a\left(\gamma(z_a)\right) 
:  \exp \left(
\sum_{i=0}^{r_a}  (\lambda^{(a)})_{i}\frac{\partial^{i}\varphi(z_a)}{i!}
 \right)         :
\right \rangle, 
\end{equation}
where $\bar{i}$ is an element in $\{1,\ldots,l=\rm{rank}\ \g\}$, $\alpha_{\bar{i}}$ ($i=1,\ldots,m$) 
are simple roots, and $:e^{-\alpha_{\bar{i}}\varphi(t_i) }:S_{\bar{i}}(t_i)$ are the screening currents 
defined in \eqref{eq screening currents}. 

Let us calculate the $\varphi$ field correlation and the $\beta\gamma$ correlation 
separately. First, we compute the $\varphi$ field correlation
\begin{equation}
\Psi ({ \bf t, z, \lambda})=\left\langle 
\prod_{i=1}^m :e^{-\alpha_{\bar{i}}\varphi(t_i)  }:  
\prod_{a=1}^n:\exp \left(
\sum_{i=0}^{r_a}  (\lambda^{(a)})_{i}\frac{\partial^{i}\varphi(z_a)}{i!}
 \right)         :
\right\rangle. 
\end{equation}
Recalling the OPE of $\partial^i\varphi(z)$ and $\partial^j\varphi(w)$ 
\eqref{eq free boson d}, we obtain
\begin{align*}
\Psi({\bf t, z,\lambda})=&\prod_{1\le a <b\le n}\left\{
(z_a-z_b)^{\frac{(\lambda^{(a)})_0(\lambda^{(b)})_0}{\kappa}}\right.
\\
&\times\left. 
\exp\left(\sum_{0\le p\le r_a, 0\le q\le r_b, \atop p+q>0}\frac{ (\lambda^{(a)})_p (\lambda^{(b)})_q}
{\kappa}\begin{pmatrix}p+q\\p\end{pmatrix}
\frac{(-1)^{p+1}}{p+q}\frac{1}{(z_a-z_b)^{p+q}}
\right)\right\}
\\
&\times \prod_{1\le i <j\le m}(t_i-t_j)^{\frac{\alpha_{\bar{i}}\alpha_{\bar{j}}}{\kappa}}
 \prod_{i=1}^m\prod_{a=1}^n
\left\{
(t_i-z_a)^{\frac{-\alpha_{\bar{i}}(\lambda^{(a)})_0}{\kappa}}
\exp\left(
\sum_{p>0}^{r_a} \frac{\alpha_{\bar{i}}(\lambda^{(a)})_p}{\kappa}\frac{1}{p(t_i-z_a)^p}
\right)
\right\}. 
\end{align*}

\subsection{}

Next, we compute the $\beta\gamma$ correlation
\begin{equation}
\omega=\left\langle
 \prod_{i=1}^mS_{\bar{i}}(t_i)\prod_{a=1}^nP_a(\gamma(z_a))
\right\rangle
\end{equation}
using the OPEs
\begin{align}
&S_\alpha(z)S_\beta(w)=\frac{1}{z-w}[S_\alpha,S_\beta](w)+\cdots,
\\
&S_\alpha(z)P_a(\gamma(w))=\sum_{p=0}^{r_a}\frac{1}{(z-w)^{p+1}}
\left(S_\alpha[p] P_a\right)(\gamma(w))+\cdots, 
\end{align}
where $\alpha,\beta\in\Delta_+$, $[S_\alpha,S_\beta](w)$
is obtained by replacing $x^\alpha$ with $ \gamma^\alpha(w)$ and 
$\frac{\partial}{\partial x^\alpha}$ with $\beta_\alpha(w) $ in the 
differential operator $[S_\alpha,S_\beta]$, and 
$(S_\alpha[p] P_a)(\gamma(w))$ is obtained by replacing $x^\alpha_i$ with 
$\partial^i \gamma^\alpha(w)/i!$ in 
the polynomial $S_\alpha[p] P_a$. 

Then, we obtain the screening currents Ward identity
\begin{align*}
\omega=&\left\langle
S_{\bar{1}}(t_1)\cdots S_{\bar{m}}(t_m)\prod_{a=1}^nP_a(\gamma(z_a))
\right\rangle
\\
&=\sum_{i=2}^m\frac{1}{t_1-t_i}
\left\langle S_{\bar{2}}(t_2)
\cdots [S_{\bar{1}},S_{\bar{i}}](t_i)\cdots S_{\bar{m}}(t_m)
\prod_{a=1}^nP_a(\gamma(z_a))
\right\rangle
\\
&+\sum_{a=1}^n\sum_{p_a=0}^{r_a}
\frac{1}{(t_1-z_a)^{p_a+1}}
\left\langle
S_{\bar{2}}(t_2)\cdots S_{\bar{m}}(t_m)
P_1(\gamma(z_1))
\cdots (S_{\bar{1}}[p_a]P_a)(z_a)\cdots P_n(\gamma(z_n))
\right\rangle. 
\end{align*}
Thus, we can calculate the $\beta\gamma$ correlation $\omega$ by repeatedly 
using the above relation. 

Therefore, we establish the integral representations of  hypergeometric functions of 
the confluent type 
\begin{equation}
\int \prod_{i=1}^mdt_i\Psi({\bf t,z, \lambda})\omega
\end{equation}
from the WZNW CFT. 

\subsection{}

Let $\g=\slt$. In this subsection, we  write down the $\beta\gamma$ correlation 
$\omega$ and we see that the integral representations coincide with  solutions to the 
confluent KZ equations for $\slt$ given in \cite{JNS}. 

Noting the fact that $(\lambda^{(a)})_p=\lambda^{(a)}_p\alpha/2$ ($a=1,\ldots,n$, $p=0,\ldots,r_a$, 
$\lambda_p^{(a)}\in\C$), 
we see that $\Psi({\bf t,z,\lambda})$ is equal to the master function of the integral 
representation in 
\cite{JNS}. 

Using the screening currents Ward identity, we have 
\begin{equation}
\omega=\langle S(t_1)\cdots S(t_m)\prod_{a=1}^nP_a(\gamma(z_a))\rangle
\\
=\sum_{\sigma\in \frak{S}_m}\sigma\left(\sum_{\left(I_1,\ldots,I_n\right)\in Y}
\prod_{a=1}^nP(a,I_a)
\right), 
\end{equation}
where $\frak{S}_m$ is the set of all permutations of $\{1,\ldots,m\}$ and $Y$, 
a set of 
$(I_1,\ldots,I_n)$ such that $I_j=\{i_1<\ldots<i_k\}$ and $I_1\cup I_2\cup \cdots \cup I_n=\{1,\ldots,m\}$ is a disjoint union, and for $I_a=\{i_1,\ldots,i_k\}$, 
\begin{equation}
P(a,I_a)=\sum_{p_1,\ldots,p_k=0}^{r_a} 
 \frac{\left[ S[p_1]\cdots S[p_k] P_a\right]}{(t_{i_1}-z_a)^{p_1+1}\cdots
 (t_{i_k}-z_a)^{p_k+1}}, 
\end{equation}
where $[f(x)]$ represents the constant term of $f(x)\in\C[x]$. 

For a polynomial $P_a=\langle \lambda^{(a)} | Z f[q_1]\cdots f[q_s]|\lambda^{(a)}\rangle$, 
from the definition of the differential realization \eqref{eq screening current p}, 
we have
\begin{equation}
S[p_1]\cdots S[p_k] P_a=
\langle \lambda^{(a)} | e[p_k]\cdots e[p_1]  Z f[q_1]\cdots f[q_s]|\lambda^{(a)}\rangle. 
\end{equation}
Since the constant term of this polynomial is given by the value at $x_i=0$, that is, $Z=1$, 
we obtain 
\begin{equation}
\left[ S[p_1]\cdots S[p_k] P_a\right]=\langle \lambda^{(a)} | e[p_k]\cdots e[p_1]   f[q_1]\cdots f[q_s]|\lambda^{(a)}\rangle. 
\end{equation}

Therefore,  an element $u$ in $M=M(\lambda^{(1)})\otimes\cdots\otimes M(\lambda^{(n)})$ is
determined by the relation of 
the integral representation and the pairing 
\begin{equation}
\int \prod_{i=1}^mdt_i \Psi({\bf t,z,\lambda}) \omega
=\langle u^*| v_1\otimes\cdots\otimes v_n\rangle, 
\end{equation}
where the element $\langle u^*|$  in the dual module $M^*$  is the dual vector corresponding to 
$u$
and $v_a\in M(\lambda^{(a)})$ ($a=1,\ldots,n$), and this element $u$ 
coincides with the integral representation for the solution of the confluent KZ equation for $\slt$.




\bigskip
\bigskip
\bigskip
\noindent {\it\bf Acknowledgments.}\quad

The authors would like thank to M.~Jimbo  for many helpful discussions.
HN is grateful to T.~Arakawa, K.~Hasegawa, G.~Kuroki, T.~Kuwabara, 
T. Suzuki and Y. Yamada 
for helpful discussions.
\bigskip


\begin{thebibliography}{[FJKLM]}

\bibitem{A}
H.~Awata,
Screening Currents Ward Identity
and
Integral Formulas for the WZNW Correlation Functions,
Prog.\  Theor.\ Phys.\ Suppl.\ {\bf 110} (1992) 303--319


\bibitem{ATY}
H.~Awata, A.~Tsuchiya and Y.~Yamada,
Integral formulas for the WZNW Correlation Fucntions,
Nucl. Phys. {\bf B365} (1991) 680--696

\bibitem{BK} 
H.M.~Babujian and A.V.~Kitaev, 
Generalized Knizhnik-Zamolodchikov equations and isomonodromy quantization of the equations 
integrable via the inverse scattering transform: Maxwell-Bloch system with pumping, 
J.\ Math.\ Phys. {\bf 39} (1998), no. 5, 2499--2506

\bibitem{EFK}
P.~Etingof, I.~Frenkel and A.~Kirillov,
Lectures on the Representation theory and
the Knizhnik-Zamolodchikov equations,
Amer.\ Math.\ Soc.,
1998

\bibitem{Fed}
R.~Fedorov,
Irregular Wakimoto modules and the Casimir connection, 
Sel.\ Math.\ New Ser.\ (2010) 	arXiv:0812.4472v2 [math.RT]



\bibitem{FF}
B.~Feigin and E.~Frenkel,
Affine Kac-Moody algebras and semi-infinite flag manifolds,
Comm. Math. Phys. {\bf 128} (1990) 161--189

\bibitem{FFT}
B.~Feigin, E.~Frenkel and V.~Toledano Laredo,
Gaudin models with irregular singularities, 
Adv.\ Math.\ {\bf 223} (2010) 873--948  

\bibitem{FMTV}
G.~Felder, Y.~Markov, V.~Tarasov and A.~Varchenko,
Differential equations compatible with KZ equations,
Math.\ Phys.\ Anal.\ Geom.\ {\bf 3} (2000) 139--177

\bibitem{H}
J.~Harnad,
Quantum isomonodromic deformations and the Knizhnik-Zamolodchikov
equations. Symmetries and integrability of difference equations
(Est\'erel, PQ, 1994),
155--161, CRM Proc. Lecture Notes, 9, Amer.\ Math.\ Soc., Providence, RI, 1996



\bibitem{IK}
K.~Ito and S.~Komata,
Feigin-Fuchs representations of arbitrary affine Lie algebras,
Mod. Phys. Lett. {\bf A6} (1991) 581--589

\bibitem{JNS}
M.~Jimbo, H.~Nagoya and J.~Sun,
Remarks on the confuent KZ equation for $\slt$ and
quantum Painlev\'e equations,
J. Phys. A: Math. Theor. 41 (2008)

\bibitem{JMU}
M.~Jimbo, T.~Miwa and K.~Ueno,
Monodromy preserving deformation of linear ordinary differential
equations with rational coefficients I: General theory and
$\tau$ function,
{\em Physica D} {\bf 2} (1981) 306--352

\bibitem{K}
G.~Kuroki,
Fock space representations of affine Lie algebras and integral representations
in Wess-Zumino-Witten models, Comm.\ Math.\ Phys.\ {\bf 142} (1991), 511--542

\bibitem{KZ}
V.G.~Knizhnik and A.B.~Zamolodchikov, 
Current algebra and Wess-Zumino modle in two dimensions, 
Nucl.\ Phys.\ B {\bf 247}, 83 (1984) 

\bibitem{NS}
H.~Nagoya and J.~Sun,
Confluent KZ equations for $\frak{sl}_N$ with Poincar\'e rank 2 at infinity, 
	arXiv:1002.2273v2 [math-ph]

\bibitem{R}
N.~Reshetikhin,
The Knizhnik-Zamolodchikov system as a deformation of the isomonodromy problem,
{\em Lett. Math. Phys.} {\bf 26} (1992), 167--177

\bibitem{Sugawara}
H.~Sugawara, 
A Field Theory of Currents, 
Phys.\ Rev.\ {\bf 170} (1968) 1659

\bibitem{SV}
V.V. Schechtman and A. Varchenko,
Hypergeometric solutions of Knizhnik-Zamolodchikov equations,
Lett. Math. Phys. {\bf  20} (1990) 279--283

\bibitem{Takasaki}
K.~Takasaki, 
Gaudin Model, KZ Equation and an Isomonodromic Problem on the Torus, 
Lett. Math. Phys. {\bf  44} (1998)

\bibitem{Yamada}
Y.~Yamada, Introduction to the conformal field theory,
Baifukan, (2006), in Japanese

\bibitem{W}
M.~Wakimoto, Fock Representation of the algebra $A_1^{(1)}$,
Comm. Math. Phys. {\bf 104} (1986) 605--609

\end{thebibliography}
\end{document}